\newtheorem{theorem}{Theorem}
\newtheorem{lemma}{Lemma}
\begin{document}

\title{Secure Fast Fourier Transform using Fully Homomorphic Encryption}

\author{Thomas Shortell \and Ali Shokoufandeh}

\date{Drexel University \\
  3141 Chestnut St. \\
  Philadelphia, PA 19104 \\
  tms38@drexel.edu,ashokouf@cs.drexel.edu
}

\maketitle

\begin{abstract}
  
Secure signal processing is becoming a de facto model for preserving
privacy. We propose a model based on the Fully Homomorphic Encryption
(FHE) technique to mitigate security breaches.  Our framework provides
a method to perform a Fast Fourier Transform (FFT) on a user-specified
signal.  Using encryption of individual binary values and FHE
operations over addition and multiplication, we enable a user to
perform the FFT in a fixed point fractional representation in binary.
Our approach bounds the error of the implementation to enable
user-selectable parameters based on the specific application.  We
verified our framework against test cases for one dimensional signals
and images (two dimensional signals).

\end{abstract}

\section{Introduction}
\label{sec:intro}

Security breaches are severe situations that can cause significant
problems when using cloud computing resources.  This can occur because
unencrypted data stored within these resources is vulnerable to
security attacks, even if the cloud computing resource is trusted.
Encrypting the data can mitigate such potential vulnerabilities.
However, if the resources are being used to perform significant
computations, then encrypting the data is not normally a possibility.
Our focus is on solving the problem by the ability to process data
while encrypted using Fully Homomorphic Encryption
(FHE)~\cite{gentry:computing}.  FHE enables a user to encrypt their
data and run a prescribed process against the encrypted data.  In this
paper we will focus on secure signal processing particularly, with
performing the Fast Fourier Transform (FFT) using the FHE framework.

Focusing on the overall problem, our user has a set of data that needs
to be processed on a cloud computing resource.  As illustrated in
Figure~\ref{fig:cloudcomputing}, the FHE process involves a user
(client side) using a cloud computing resource (server side).  The
first step in the process is to generate keys for the encryption
((public, secret) key pair).  With this key pair, the original signal
can be encrypted (step 2).  The next step (3) in the process is to
transport the data from the client to server; which is not a major
focus of this paper, except the data should be transmitted via a secure
channel to minimize exposure.  Next, the encrypted process (step 4)
can occur (FFT in this paper).  Here it is important to note that the
processing on the server is actually developed by the user and
transported to the server (this is not shown in the diagram).  Similar
to step 3, step 5 involves getting the data from the server back to
the client (assumed secure connection).  Finally (step 6), the
encrypted processed signal can be decrypted. The decryption results in
the processed signal for the user.
\begin{figure}[ht]
  \centering
  \includegraphics[scale=0.5]{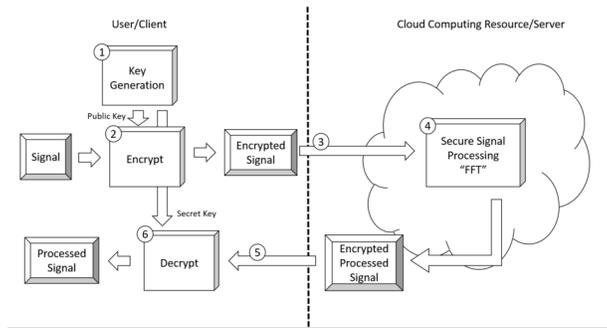}
  \caption{FHE in Cloud Computing: Steps to perform FHE on a cloud computing resources from a data perspective.}
  \label{fig:cloudcomputing}
\end{figure}
This process also leaves open the possibility of generating algorithms that are unknown to the client side.
A complex algorithm that uses an FFT would need the building block of an encrypted FFT.

Starting with the approach used by Shortell and Shokoufandeh \cite{shortell215secruresignal}, we modify it 
to perform an FFT in the encrypted
space.  Our approach focuses on using single binary digit encryption
of fixed point values.  This requires development of binary gate
processors to take binary digit computations to full byte and word
processing.  Once this is done, it is possible to build an encrypted
version of the FFT.  This is analogous to building a CPU and having a
computer program that performs the FFT via additions, subtractions,
and multiplications.

\begin{figure}
  \centering
  \subfloat{
    \includegraphics[scale=1.0]{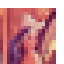}}
  \hfill
  \subfloat{
    \includegraphics[scale=1.0]{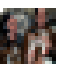}}
  \hfill
  \subfloat{
    \includegraphics[scale=1.0]{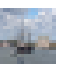}}
  \hfill
  \subfloat{
    \includegraphics[scale=1.0]{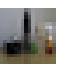}}
  \hfill
  \subfloat{
    \includegraphics[scale=1.5]{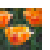}}
  \hfill
  \subfloat{
    \includegraphics[scale=3.0]{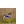}}
  \hfill
  \subfloat{
    \includegraphics[scale=1.0]{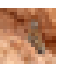}}
  \hfill
  \subfloat{
    \includegraphics[scale=1.0]{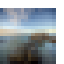}}
  \hfill
  \subfloat{
    \includegraphics[scale=1.0]{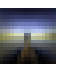}}
  \hfill
  \subfloat{
    \includegraphics[scale=1.0]{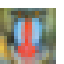}}

  \subfloat{
    \includegraphics[scale=1.0]{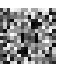}}
  \hfill
  \subfloat{
    \includegraphics[scale=1.0]{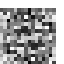}}
  \hfill
  \subfloat{
    \includegraphics[scale=1.0]{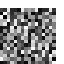}}
  \hfill
  \subfloat{
    \includegraphics[scale=1.0]{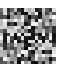}}
  \hfill
  \subfloat{
    \includegraphics[scale=1.0]{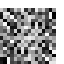}}
  \hfill
  \subfloat{
    \includegraphics[scale=1.0]{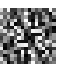}}
  \hfill
  \subfloat{
    \includegraphics[scale=1.0]{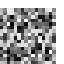}}
  \hfill
  \subfloat{
    \includegraphics[scale=1.0]{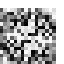}}
  \hfill
  \subfloat{
    \includegraphics[scale=1.0]{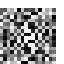}}
  \hfill
  \subfloat{
    \includegraphics[scale=1.0]{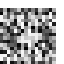}}

  \subfloat{
    \includegraphics[scale=1.0]{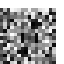}}
  \hfill
  \subfloat{
    \includegraphics[scale=1.0]{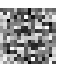}}
  \hfill
  \subfloat{
    \includegraphics[scale=1.0]{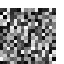}}
  \hfill
  \subfloat{
    \includegraphics[scale=1.0]{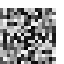}}
  \hfill
  \subfloat{
    \includegraphics[scale=1.0]{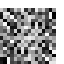}}
  \hfill
  \subfloat{
    \includegraphics[scale=1.0]{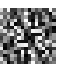}}
  \hfill
  \subfloat{
    \includegraphics[scale=1.0]{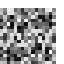}}
  \hfill
  \subfloat{
    \includegraphics[scale=1.0]{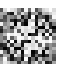}}
  \hfill
  \subfloat{
    \includegraphics[scale=1.0]{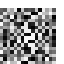}}
  \hfill
  \subfloat{
    \includegraphics[scale=1.0]{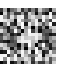}}

  \subfloat{
    \includegraphics[scale=1.0]{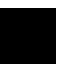}}
  \hfill
  \subfloat{
    \includegraphics[scale=1.0]{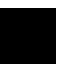}}
  \hfill
  \subfloat{
    \includegraphics[scale=1.0]{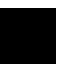}}
  \hfill
  \subfloat{
    \includegraphics[scale=1.0]{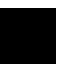}}
  \hfill
  \subfloat{
    \includegraphics[scale=1.0]{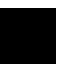}}
  \hfill
  \subfloat{
    \includegraphics[scale=1.0]{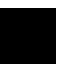}}
  \hfill
  \subfloat{
    \includegraphics[scale=1.0]{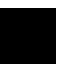}}
  \hfill
  \subfloat{
    \includegraphics[scale=1.0]{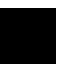}}
  \hfill
  \subfloat{
    \includegraphics[scale=1.0]{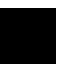}}
  \hfill
  \subfloat{
    \includegraphics[scale=1.0]{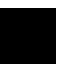}}

  \caption{Example of running FFT in encrypted domain for 2D
    images. The first row shows original images, the second row is the
    unencrypted FFT, third row is the FFT over FHE, and final row is
    the difference of the FFT results. }
  \label{fig:images}
\end{figure}

The remainder of this paper will flow as follows.
Section~\ref{sec:related} presents the related work to encryption and
performing secure signal processing.  We discuss some background
related to the FHE scheme used in Section~\ref{sec:theory}.
Section~\ref{sec:fftinfhe} provides the detailed discussion on
implementation of FFT.  In Sections~\ref{sec:erroranalysis}
and~\ref{sec:experimentalresults} we discuss the accuracy errors and
our implementation evaluation results.
We provide details on Time/Space Complexities of FFT over FHE in Section~\ref{sec:timespacecomplexities}.
Finally, we present our conclusions and future work in Section~\ref{sec:conclusion}.

\section{Related Work}
\label{sec:related}

Research in secure signal processing has received increasing attention
over the past decade.  Troncoso-Pastoriza and Perez-Gonzalez
\cite{troncoso2013secure} examined secure signal processing in the
cloud very similar to the concept we use in our work.  They focused on
privacy issues that occur with cloud computing which is ideally what a
Fully Homomorphic Encryption scheme can provide.  Wang
et. al. \cite{wang2012theoretical} also considered privacy issues with
secure signal processing.  Their focus was on biometrics and
protecting authentication and privacy.  This paper is similar in
keeping confidentiality of private data in a cloud computing
environment.

Other research in this field focuses on using the Paillier encryption
scheme that provides homomorphic addition and constant multiplication.
But the lack of ciphertext-ciphertext multiplication precludes this
scheme from being fully homomorphic. Hsu, Lu, and Pei
\cite{hsu2011homomorphic} used this scheme to perform the Scale
Invariant Feature Transform (SIFT) in an encryption fashion.  Bai
et. al. \cite{bai2014surf} also used the scheme for an encrypted SURF.
There are a few other examples of using the Paillier scheme
\cite{lathey2013homomorphic} \cite{mohanty2013scale}.  While all of
these examples can be performed in the encrypted domain, they fail to be a
Fully Homomorphic Encryption.  In contrast, our approach uses a Fully
Homomorphic Encryption scheme.

It is also important to note some of the recent advances in Fully
Homomorphic Encryption in the past few years.  Gentry developed the
original scheme in
2009~\cite{gentry:computing}~\cite{gentry2009fully}.  The original
scheme was designed for encrypting binary values and improvements over
time continued to look at time and space complexity and the ability to
encrypt more than just binary values 
\cite{brakerski2012leveled} \cite{brakerski2011efficient}.  An
improved scheme \cite{gentry2013homomorphic} developed a few years
later is used in this paper as a FHE tool of choice for our solution

\section{Notation and Background}
\label{sec:theory}

In this paper we use small caps to identify individual binary gates.
$Z_q$ is used to represent an integer ring with a modulus of $q$.
Letters are used for variables in error analysis equations.  For
Complex variables, we will use $\mbox{Re}$ and $\mbox{Im}$ to
represent the real and imaginary values of a complex number.

To perform an encrypted FFT, we need the ability to encrypt and then
process the ciphertexts.  We use the Fully Homomorphic Encryption
scheme defined proposed by Gentry, Sahai, and
Waters~\cite{gentry2013homomorphic} that provides the basic
capabilities including: key generation, encryption, decryption, and
evaluation.  Key generation provides a public/secret key pair that
encrypts with the public key and decrypts with the secret key.
Hardness of their scheme is based on learning with errors
problem~\cite{regev2009lattices}; hence the secret key is a trapdoor
in the public key to extract the original plaintext.  Encryption and
decryption are relatively straightforward conceptual processes.  It is
important to note that the ciphertexts are matrices that embed
integers numbers, and the scheme can operate under the $Z_q$ ring.
Moving into evaluating ciphertexts, the scheme has four capabilities:
addition, constant multiplication, two ciphertext multiplication, and
a \textsc{nand} gate.  The first three operate over the $Z_q$ ring, but last
operation is only for binary values.  We also rely on the fact the
\textsc{nand} gates can be combined into any other gate.

\section{FFT in FHE}
\label{sec:fftinfhe}

To implement FFT in the chosen FHE scheme, it is necessary to develop
a structure that enables using the binary \textsc{nand} gate to perform
encrypted additions and constant multiplications.  Since the
individual ciphertexts are binary values, the additions and
multiplications are going to need to be binary gates.  Additionally,
we need to perform calculations with fractional numbers
given an arbitrary bit size.  So the addition and multiplication
processes in binary need to account for this.

Binary addition and multiplication are computed using different binary gates mainly
\textsc{xor} and \textsc{and} gates.
\textsc{nand} gates are extremely useful because all other binary gates can be calculated from them!
This includes \textsc{and}, \textsc{or}, \textsc{xor}, \textsc{not},
\textsc{nor}, and \textsc{xnor}.  Having these gates available enables
generation of more complex binary processes.
Many of these gate computations are well known techniques.

Our next step is to use these binary gates and generate half and full
adders.  Having half and full adders will allow for performing an
arbitrary bit size addition of two ciphertexts.  Binary addition (and
subtraction) is relatively straightforward.  Full adders linked to
together starting from the lowest bit to the highest bit will generate
the addition of two ciphertexts.  Figure~\ref{fig:eightbitadder} shows
how the adders are built together to perform arbitrary bit size
addition (example of eight bits).  Interestingly, subtraction can be
performed by inverting the second binary values and using an initial carry bit
of 1.

\begin{figure}[ht]
  \centering
  \includegraphics{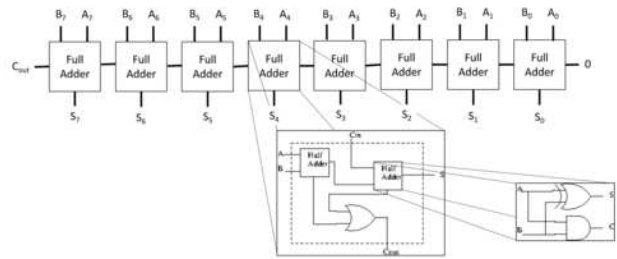}
  \caption{Example Adder for 8 bits including half and full adders}
  \label{fig:eightbitadder}
\end{figure}

Next, we focus on constant multiplication and ciphertext-ciphertext multiplication.
Binary multiplication causes the doubling of size for the binary values, which is an extremely important fact because of a two's complement implementation of binary numbers.
Bit extension is necessary because otherwise negative numbers will not be calculated correctly.
We used an implementation known as Wallace Trees for multiplication \cite{wallace1964suggestion} as it provides a way to arbitrarily handle different binary sizes \footnote{for brevity, we refer the reader to \cite{wallace1964suggestion} for implementation}.
There is a potential security concern that may be introduced by constant multiplication.
Because the constant is unencrypted, the binary values of the original constant can be partially traced into the Wallace tree computations.
This can potentially provide an attacker knowledge about intermediary values.
While this will not review the original ciphertext, it enables the attacker to identify values that the encrypted result cannot be.
For example, if a few zeroes in the constant factor can identify a result bit to be zero, the space of potential values of the result has been reduced.
Technically this is true of just multiplying an even constant factor because the result must be even.

At this point, we have methods to add, subtract, and multiply
integers. The FFT algorithm will require computations involving
floating point numbers.  Our implementation uses fixed point numbers
to emulate floating point numbers.  By using a fixed point
representation, we can use the integer based numbers represent
floating point numbers.  This was similar to what was done in other
work \cite{shortell215secruresignal} but we improve this by using
binary digits to handle cases that the straight integer implementation
can't handle.  Using a multiplier that is a factor of 2 (based on
using binary values), floating point numbers can become fixed point
numbers in an integer space.  This approach introduces error which we
will discuss in the next section.  For addition and subtraction, fixed
point implementation is easily provided that the fractional bit size
is consistent for both numbers (which we enforce).  Multiplication is
complicated because of the expansion in the bit size (doubling) and
our solution is to extract a different set of the bits than just the
lowest $x$ (where $x$ is the input bit size).  By extracting the
middle bits of the input size, an implicit division is occurring which
provides a true implementation of fixed point multiplication.  This
also solves an earlier problem that caused long term expansion of
fixed point multipliers during multiplication.

We have all the necessary building blocks required to implement FFT in the
encrypted space.  Step one of FFT is to perform a bit reversal, which
is extremely easy since the order of the points is known and can
easily be modified in position without revealing anything other than a
bit reversal occurred.  Step two requires calculating the $W_n$
multipliers and running the butterfly computations over the signal for
$\log N$ iterations ($N \log N$ driver).  The multipliers are
constants, so these can be input and used with constant
multiplication.  Constant multiplication results drive the additions
for the signal points; yielding the two new points for the next
iteration of the FFT.  After $N$ iterations of signal calculations and
$\log N$ iterations of the outer loop, the process will be complete.

\begin{theorem}
  \label{thm:fftinfhe}
  Fast Fourier Transform can be calculated in the encrypted domain via
  Fully Homomorphic Encryption.
\end{theorem}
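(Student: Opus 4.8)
The plan is to give a constructive proof that reduces the FFT to a finite composition of the primitive operations guaranteed by the Gentry--Sahai--Waters scheme. First I would establish the gate layer: since the scheme supplies a \textsc{nand} gate over encrypted binary values and \textsc{nand} is functionally complete, every Boolean gate needed downstream (\textsc{and}, \textsc{or}, \textsc{xor}, \textsc{not}, and their negations) can be realized as a fixed-depth \textsc{nand} circuit acting directly on ciphertexts. This lifts the universality of Boolean logic into the encrypted domain and gives us a closed toolbox of encrypted gates.

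Next I would build the arithmetic layer on top of the gates. Using half and full adders chained from the least significant bit (Figure~\ref{fig:eightbitadder}), I would assemble an arbitrary-width encrypted adder, and obtain subtraction by the standard two's-complement trick of inverting the subtrahend and seeding the carry with~$1$. For multiplication I would invoke the Wallace-tree construction over encrypted bits, with sign extension to the doubled output width so that two's-complement negatives multiply correctly. To represent the complex twiddle factors $W_n$ and the intermediate signal values, I would fix a scaling factor that is a power of two and work in a fixed-point integer representation; addition and subtraction then compose directly provided the fractional width is held constant, while multiplication requires extracting the middle block of the doubled output to realize the implicit division by the scale.

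With these blocks in place, I would encode the FFT itself. Bit reversal is a public permutation of the encrypted data positions and costs nothing cryptographically, since only the fact that a reordering occurred is revealed. The remaining work is the $\log N$ butterfly stages: each stage multiplies signal points by the constant twiddle factors via encrypted constant multiplication and recombines the products through encrypted addition and subtraction, handling the real and imaginary parts as separate fixed-point words. Because every one of these steps is a finite composition of the encrypted primitives established above, the entire radix-2 FFT executes in the encrypted domain, which proves the claim.

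I expect the main obstacle to be the fixed-point multiplication rather than the gate universality, which is essentially immediate. Naive chaining would double the bit width at each product and let the effective scaling factor grow geometrically across the $\log N$ stages; the middle-bit extraction must therefore be justified as a faithful realization of division by the scale, and the quantization error it introduces must be controlled so that the composed circuit still decrypts to a meaningful FFT output. This accuracy accounting is exactly what the later error analysis in Section~\ref{sec:erroranalysis} is needed to discharge.
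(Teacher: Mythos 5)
Your proposal is correct and follows essentially the same route as the paper: the identical bottom-up hierarchy of \textsc{nand}-completeness, half/full adders chained into arbitrary-width addition with two's-complement subtraction, Wallace-tree multiplication with sign extension, fixed-point arithmetic with middle-bit extraction for the implicit scale division, and finally the observation that bit reversal is a public permutation while the butterfly stages reduce to encrypted additions, subtractions, and constant multiplications. Your closing remark that the quantization error must be controlled separately matches the paper exactly, which likewise defers that accounting to its error-analysis section (Theorem~\ref{thm:fftfheerror}) and discharges encryption/decryption correctness, as you implicitly do, by appeal to the Gentry--Sahai--Waters scheme itself.
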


To prove Theorem~\ref{thm:fftinfhe}, we need to show that the process
does in fact perform the FFT in the encrypted domain. Following the
process, we must prove that the encryption works, the evaluation
piece, and then the decryption piece.

\begin{lemma}
  \label{lemma:encryption}
  The FHE scheme properly encrypts values in the fixed fractional
  format to a vector of encrypted ciphertexts.
\end{lemma}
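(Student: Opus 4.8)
The plan is to establish Lemma~\ref{lemma:encryption} by tracing a fixed-point fractional value through the full encoding-to-encryption pipeline and verifying that each stage preserves the information needed for faithful later decryption. First I would fix notation for the encoding: given a target fractional value $v$ together with a user-selected integer bit size and fractional bit size (the scaling factor being a power of two, as described in Section~\ref{sec:fftinfhe}), I would write down the two's complement binary representation $v \mapsto (b_{n-1}, b_{n-2}, \dots, b_1, b_0)$ with each $b_i \in \{0,1\}$. The key claim at this stage is that the map from admissible fractional values to bit vectors is well-defined and injective up to the bounded rounding error already acknowledged in the paper; I would state the admissible range explicitly so that the two's complement sign bit is unambiguous and no silent overflow occurs.

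Next I would invoke the encryption primitive of the Gentry--Sahai--Waters scheme \cite{gentry2013homomorphic} recalled in Section~\ref{sec:theory}. Since that scheme encrypts individual binary values into ciphertext matrices over $Z_q$, I would apply it bit-by-bit, defining the encrypted representation of $v$ as the vector of ciphertexts $(\textsc{Enc}(b_{n-1}), \dots, \textsc{Enc}(b_0))$. The substantive content here is a correctness statement: for each bit, decryption recovers the plaintext bit, i.e. $\textsc{Dec}(\textsc{Enc}(b_i)) = b_i$, which I would cite as the base correctness guarantee of the underlying scheme rather than reprove. The lemma then reduces to the assertion that applying a correct single-bit encryption componentwise yields a correct encryption of the whole bit vector, which follows because the components are independent and the decoding map in the previous paragraph is a fixed, public, deterministic function of the recovered bits.

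The main obstacle I anticipate is not the cryptographic correctness, which is inherited from the cited scheme, but pinning down the boundary conditions of the encoding so that the round-trip is exact on the nose for integers and controlled for fractions. Specifically I would need to argue that the chosen bit width is sufficient to hold the sign-extended two's complement form without truncation, and that the fractional scaling introduces only the rounding error deferred to Section~\ref{sec:erroranalysis}; conflating these two error sources would weaken the claim. I would therefore separate the statement into an exact part (the integer bit vector encrypts and decrypts losslessly) and an approximate part (the rational-to-fixed-point quantization), and phrase the lemma's ``properly encrypts'' as: the composition of quantization followed by componentwise encryption is decryptable back to the same quantized value. Concluding, I would remark that this lemma only certifies the encryption stage in isolation; the evaluation stage (that the \textsc{nand}-built adders and Wallace-tree multipliers act correctly on these ciphertext vectors) and the decryption stage are handled by the subsequent lemmas, so the scope here is deliberately confined to steps~1 and~2 of the pipeline in Figure~\ref{fig:cloudcomputing}.
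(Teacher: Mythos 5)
Your proposal is correct, and at its cryptographic core it does exactly what the paper does: the paper gives no standalone proof of this lemma at all, dispatching it (together with the decryption lemma) in a single sentence by citing Theorem~3 of \cite{gentry2013homomorphic} for ``the working of the FHE scheme basics.'' Your appeal to the GSW scheme's bit-level correctness guarantee, cited rather than reproven, is therefore the same move. Where you go beyond the paper is in making the encoding layer explicit: the two's complement fixed-point map, its well-definedness on an admissible range, componentwise encryption of the bit vector, and in particular the separation of the exact part (lossless round-trip of the quantized bit vector) from the approximate part (rational-to-fixed-point quantization, deferred to the error analysis). The paper leaves all of this implicit, even though its Section~\ref{sec:erroranalysis} tacitly depends on exactly the decomposition you articulate --- that the only error source is quantization, not encryption. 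So your version buys a cleaner, more defensible reading of ``properly encrypts'' (composition of quantization and componentwise encryption is invertible back to the quantized value), at the cost of some length; the paper's version buys brevity by trusting the reader to supply the encoding details. There is no gap in your argument, and your closing remark correctly scopes the lemma to the encryption stage, matching how the paper partitions the work among Lemmas~\ref{lemma:encryption}, \ref{lemma:evaluation}, and~\ref{lemma:decryption}.
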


\begin{lemma}
  \label{lemma:decryption}
  The FHE scheme properly decrypts values in the fixed fractional
  format from a vector of ciphertexts.
\end{lemma}

\begin{lemma}
  \label{lemma:evaluation}
  The FHE scheme properly evaluates the Fast Fourier Transform for binary vector ciphertexts of a vector of fixed fractional values.
\end{lemma}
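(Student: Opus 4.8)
The plan is to establish Lemma~\ref{lemma:evaluation} by a bottom-up composition argument that reduces the correctness of the whole transform to the single homomorphic property guaranteed by the Gentry--Sahai--Waters scheme~\cite{gentry2013homomorphic}: that a \textsc{nand} applied to two ciphertexts decrypts to the \textsc{nand} of the underlying plaintext bits. Everything else in the construction is a fixed Boolean/arithmetic circuit layered on top of that single gate, so the task is to show that each layer preserves the decrypt-then-compute $=$ compute-then-decrypt relationship, and that the relationship still holds after composing all of the layers into the FFT.

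First I would fix notation for the homomorphic relation: write $\mathcal{E}$ for encryption and call an evaluated circuit $C$ \emph{correct} if decrypting $C$ applied to ciphertexts equals $C$ applied to the plaintexts. The atomic case is the \textsc{nand} gate, correct by the scheme. Next I would argue by structural induction on circuit size that every gate in the library (\textsc{and}, \textsc{or}, \textsc{xor}, \textsc{not}, \textsc{nor}, \textsc{xnor}) is correct, since each is a fixed composition of \textsc{nand} gates and composing correct gates yields a correct circuit. With the gate library certified, the half and full adders of Figure~\ref{fig:eightbitadder} are correct, and chaining them certifies arbitrary-width binary addition (hence subtraction via the invert-and-initial-carry trick); likewise the Wallace-tree multiplier~\cite{wallace1964suggestion} is a fixed gate network and so is correct.

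The next step lifts bit-level correctness to the fixed-point arithmetic the FFT actually uses. For addition and subtraction this is immediate once the fractional bit positions are aligned, which the construction enforces. For multiplication I would show that forming the full double-width Wallace product and then extracting the middle window of bits implements the scaling-by-$2^{-x}$ division, so that the recovered word is the correct fixed-point product up to the truncation error quantified separately in Section~\ref{sec:erroranalysis}. Finally I would induct over the FFT itself: the initial bit reversal is a permutation of ciphertext positions and involves no gate evaluation, so it is trivially correct; each of the $\log N$ butterfly stages is built solely from constant multiplications by the $W_n$ twiddle factors and fixed-point additions/subtractions, all already certified; and composing $\log N$ correct stages yields a correct evaluation of the DFT.

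The part I expect to be the main obstacle is not the Boolean bookkeeping but the interaction between correctness and the scheme's noise budget. Homomorphic correctness of \textsc{nand} holds only while the accumulated ciphertext noise stays below the decryption threshold, and the FFT circuit has depth growing with both the word size (through the multiplier) and $\log N$ (through the stage count). I would therefore need to argue that the scheme's noise management, with bounded growth per level and refreshing where required, keeps every ciphertext decryptable along the deepest path of the transform. Separating this ``does it still decrypt'' question from the ``is the computed integer right'' question is the delicate step, and it is what makes the composition argument more than a formality.
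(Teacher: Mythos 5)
Your proposal matches the paper's own proof essentially step for step: the paper likewise builds up from the scheme's \textsc{nand} gate through lemmas for the derived gates, the half and full adders, arbitrary-width addition/subtraction, and Wallace-tree integer and fixed-point multiplication, then observes that the bit reversal is an unencrypted permutation and that each butterfly reduces to six real equations made of already-certified fixed-point additions, subtractions, and constant multiplications. The one place you go beyond the paper --- the noise-budget question --- is something the paper absorbs into the standing assumption that the \textsc{nand} gate works, deferring to Theorem~3 of \cite{gentry2013homomorphic} and the remark that ciphertexts can be refreshed, so flagging it is a fair observation but does not constitute a different route.
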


To prove Lemma~\ref{lemma:evaluation}, we need to show that the output
of each binary vector is the same result as expected by the FFT
algorithm.  The key point here is that the butterfly computations of
FFT provide the correct result.  As we had shown earlier, addition,
subtraction, and multiplication are needed to perform the butterfly
computation.  This becomes recursive as each operation is built from
binary gates and finally at the \textsc{nand} gate from the scheme itself.  We
start with individual lemmas for the individual gates and move up to
three main operations.  We assume the \textsc{nand} gate is working as
part of these lemmas.  Theorem 3 of \cite{gentry2013homomorphic}
proves the working of the FHE scheme basics (include Lemmas~\ref{lemma:encryption} and~\ref{lemma:decryption}).

\begin{figure}
  \centering
  \subfloat[\textsc{not}]{
    \includegraphics[width=.4\linewidth]{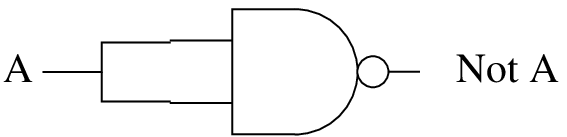}\label{fig:gates:not}}
  \hfill
  \subfloat[\textsc{and}]{
    \includegraphics[width=.4\linewidth]{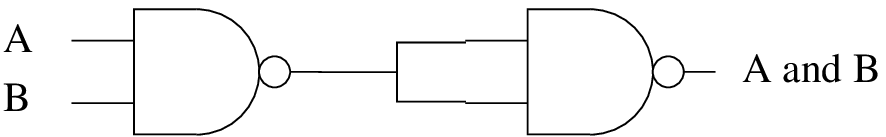}\label{fig:gates:and}}\\
  
  \subfloat[\textsc{or}]{
    \includegraphics[width=.4\linewidth]{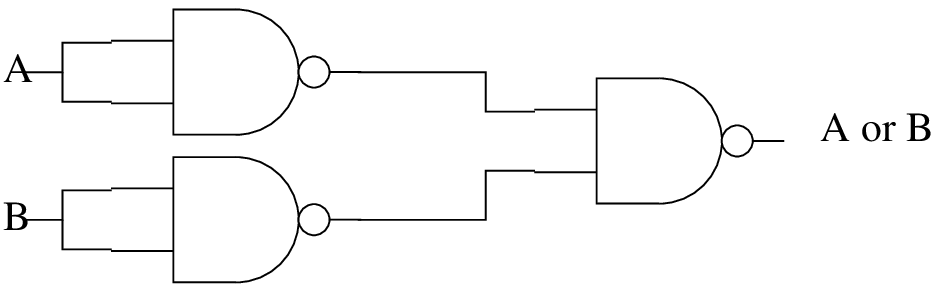}\label{fig:gates:or}}
  \hfill
  \subfloat[\textsc{nor}]{
    \includegraphics[width=.4\linewidth]{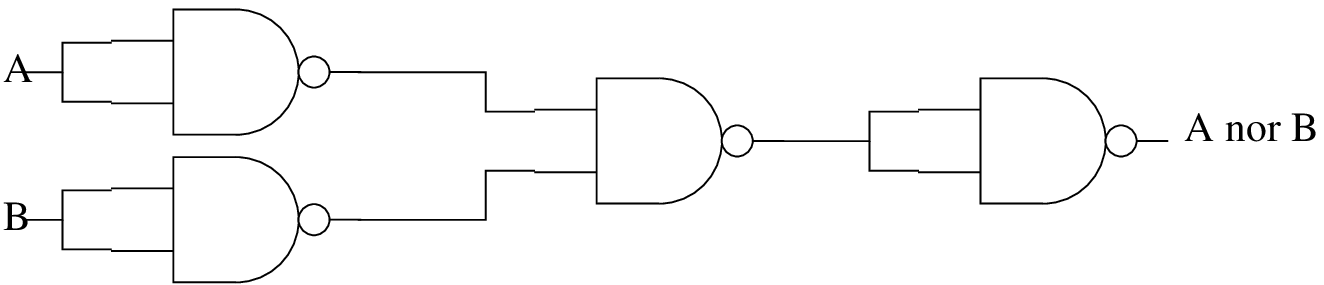}\label{fig:gates:nor}}\\
  
  \subfloat[\textsc{xor}]{
    \includegraphics[width=.4\linewidth]{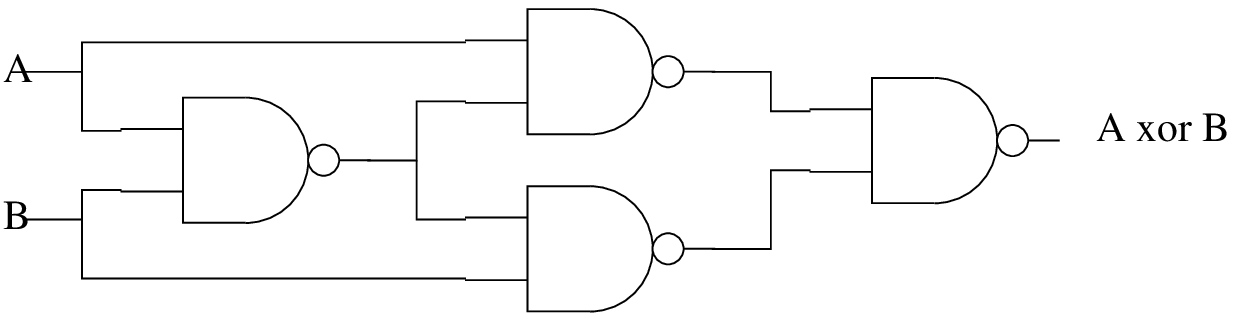}\label{fig:gates:xor}}
  \hfill
  \subfloat[\textsc{xnor}]{
    \includegraphics[width=.4\linewidth]{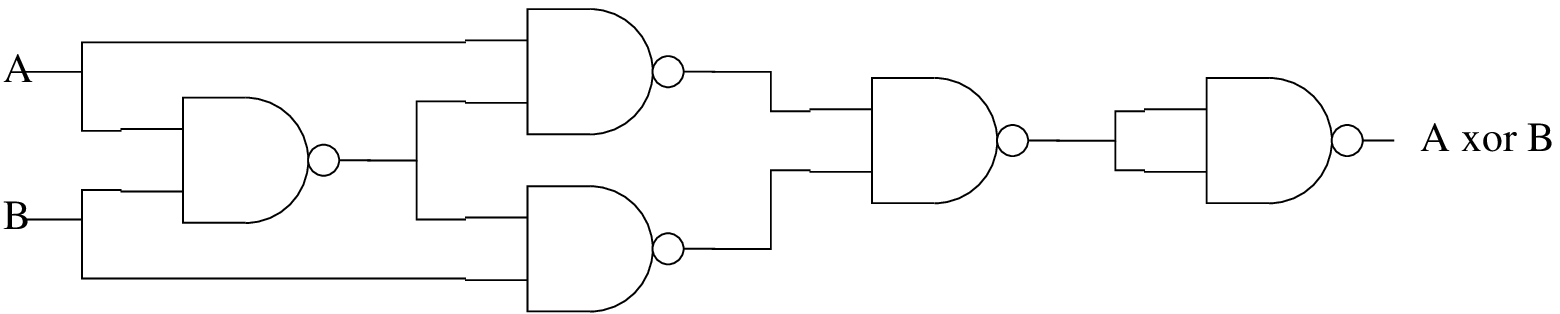}\label{fig:gates:xnor}}
  
  \caption{Binary Gate Implementations using only \textsc{nand} gates}
  \label{fig:binarygates}
\end{figure}

\begin{lemma}
  \label{lemma:binarygates}
  Given the \textsc{nand} gate of the FHE scheme works, the FHE scheme properly performs a \textsc{not}, \textsc{and}, \textsc{or}, and \textsc{xor}  gates. 
\end{lemma}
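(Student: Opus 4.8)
The plan is to prove functional completeness of the \textsc{nand} gate within the scheme by exhibiting each target gate as an explicit finite composition of \textsc{nand} gates and verifying correctness of each composition. Since the lemma grants a working \textsc{nand} gate, the essential content is twofold: (i) that the \textsc{nand}-only realizations depicted in Figure~\ref{fig:binarygates} compute the intended Boolean functions, and (ii) that feeding the ciphertext output of one \textsc{nand} invocation as input to the next preserves correctness, so the composite ciphertext decrypts to the expected bit.

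First I would handle \textsc{not} as the base case: setting both inputs of a \textsc{nand} gate to the same ciphertext $a$ gives $\textsc{nand}(a,a) = \overline{a \wedge a} = \overline{a}$, which I would confirm against the two-row truth table. With \textsc{not} available, \textsc{and} follows immediately as $\textsc{not}(\textsc{nand}(a,b))$, two chained \textsc{nand} gates, since $\overline{\overline{a \wedge b}} = a \wedge b$. For \textsc{or} I would invoke De Morgan's law, writing $a \vee b = \overline{\overline{a} \wedge \overline{b}} = \textsc{nand}(\textsc{not}(a), \textsc{not}(b))$, and verify the four input assignments.

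The main step is \textsc{xor}, whose \textsc{nand}-only realization uses the largest number of gates. I would take the standard four-gate construction, letting $c = \textsc{nand}(a,b)$ and then forming $\textsc{nand}(\textsc{nand}(a,c),\, \textsc{nand}(b,c))$, and verify by Boolean algebra (or exhaustively over the four inputs) that this equals $a \oplus b$. Once each construction is shown to compute the correct plaintext function, correctness of the encrypted gate reduces to correctness of its constituent \textsc{nand} gates: because each realization is a fixed, bounded-depth circuit and every internal edge carries a valid ciphertext by the assumed correctness of \textsc{nand}, an induction on circuit depth shows the final ciphertext decrypts to the target value.

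The step I expect to be the main obstacle is the composition argument rather than any single truth-table check. In the FHE setting, chaining \textsc{nand} gates accumulates noise in the ciphertext, and one must ensure each intermediate ciphertext remains within the decryptable regime. Here the hypothesis that ``the \textsc{nand} gate works'' is doing real work: I would read it as guaranteeing that a \textsc{nand} invocation returns a ciphertext that is again a legitimate, correctly-decrypting input for a subsequent \textsc{nand} gate (e.g.\ via the scheme's bootstrapping), so that bounded-depth composition is legitimate. Under that reading, each construction has constant depth and correctness propagates without any further noise analysis.
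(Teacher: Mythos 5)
Your proposal is correct and follows the same underlying route as the paper --- realizing each target gate as a fixed composition of \textsc{nand} gates --- but you carry out explicitly what the paper only cites. The paper's proof is three sentences: it appeals to prior work \cite{wesselkamper1975sole} for the functional completeness of \textsc{nand}, points to Figure~\ref{fig:binarygates} for the realizations, and concludes that since the scheme computes \textsc{nand} it computes the remaining gates. You instead exhibit the circuits concretely (\textsc{not} as a \textsc{nand} of an input with itself, \textsc{and} as a negated \textsc{nand}, \textsc{or} via De Morgan, and the standard four-gate \textsc{xor}) and verify them over the truth table, which the paper delegates entirely to the figure and the citation. More substantively, you isolate the one step the paper passes over in silence: that chaining encrypted \textsc{nand} gates is itself legitimate, i.e., that each output ciphertext is a valid, correctly decrypting input to the next gate despite noise accumulation. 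Your reading of the hypothesis ``the \textsc{nand} gate works'' as including composability of ciphertexts (via the leveled structure or bootstrapping of the underlying scheme of \cite{gentry2013homomorphic}) is exactly what is needed to make the paper's conclusion rigorous; plaintext truth-table correctness of the circuits alone would not give correctness of the encrypted evaluation. In short, your version buys a self-contained, composition-aware argument, while the paper's buys brevity by outsourcing both the gate constructions and the noise question to references.
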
\footnote{We omit \textsc{nor} and \textsc{xnor} because they are not used for FFT.}

\begin{proof}
  Previous research \cite{wesselkamper1975sole} has shown that a NAND gate can be used to generate all other logic gates including \textsc{not}, \textsc{and}, \textsc{or}, and \textsc{xor} (which are important to this paper).
  Examples of the logic gates are shown in Figure~\ref{fig:binarygates}.
  Because the FHE scheme can compute a \textsc{nand} gate,  therefore we know the framework will correctly compute the remaining logic gates. 
\end{proof}

With Lemmas for the individual binary gates, we can create Lemmas for the half and full adders.
Having the individual Lemmas will make the proofs significantly easier. 

\begin{lemma}
  \label{lemma:halfadder}
  Given that a \textsc{xor} and \textsc{and} gate are available in the FHE scheme, the FHE scheme properly calculates a half adder. 
\end{lemma}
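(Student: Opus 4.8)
The plan is to reduce the half adder to the two gates already certified in Lemma~\ref{lemma:binarygates}. First I would recall the functional specification of a half adder acting on two input bits $a$ and $b$: it produces a sum bit $s = a \oplus b$ and a carry bit $c = a \wedge b$. I would confirm this specification against the standard two-input truth table (sum $=0,1,1,0$ and carry $=0,0,0,1$ as $(a,b)$ ranges over $(0,0),(0,1),(1,0),(1,1)$), so that the target behavior is unambiguously fixed before reasoning about the encrypted evaluation.

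Next I would invoke Lemma~\ref{lemma:binarygates}, which guarantees that the FHE scheme correctly evaluates both an \textsc{xor} gate and an \textsc{and} gate on encrypted inputs, given that the underlying \textsc{nand} gate works. Since the half adder is nothing more than the parallel application of one \textsc{xor} gate and one \textsc{and} gate to the \emph{same} pair of input ciphertexts, correctness of each individual gate immediately yields correctness of the combined circuit: the \textsc{xor} branch produces a ciphertext decrypting to $a \oplus b$, matching the sum, and the \textsc{and} branch produces a ciphertext decrypting to $a \wedge b$, matching the carry. No routing or sequencing between the two branches is required, so the argument is a direct composition.

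The only subtlety worth stating explicitly — and hence the main point to verify — is that supplying the same two input ciphertexts to two independent gate evaluations yields two output ciphertexts that \emph{each} decrypt to the intended bit without interference. This follows because the two gates operate on the inputs independently, and the evaluation correctness established for the scheme (Theorem~3 of \cite{gentry2013homomorphic}, already cited above) ensures each output ciphertext is well-formed and decrypts correctly in isolation. I therefore expect no genuine obstacle here: the entire content is the observation that a half adder factors exactly into an \textsc{xor} and an \textsc{and} applied in parallel, both of which Lemma~\ref{lemma:binarygates} has already shown the FHE scheme computes correctly.
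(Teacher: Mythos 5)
Your proposal is correct and takes essentially the same route as the paper: the paper's proof likewise reduces the half adder to one \textsc{xor} gate and one \textsc{and} gate (pointing to the structure in Figure~\ref{fig:eightbitadder}) and concludes correctness directly from Lemma~\ref{lemma:binarygates}. You simply spell out explicitly — via the truth-table specification and the parallel-composition remark — what the paper leaves implicit in its figure reference.
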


\begin{proof}
  From Figure~\ref{fig:eightbitadder}, we have the half adder structure in the bottom right hand corner.
  We are not proving the half adder itself is right, but proving that it is correctly calculated by the FHE scheme via the gates, which is true because of the Lemmas~\ref{lemma:binarygates}.  
\end{proof}

\begin{lemma}
  \label{lemma:fulladder}
  Given that an \textsc{or} gate and a half adder are available in the FHE scheme, the FHE scheme properly calculates the full adder. 
\end{lemma}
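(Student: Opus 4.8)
The plan is to follow exactly the compositional strategy used for the half adder in Lemma~\ref{lemma:halfadder}: rather than re-establishing that the full-adder circuit is logically correct (a classical fact), I would argue that the FHE scheme faithfully reproduces whatever that circuit computes, because every one of its constituent building blocks has already been shown to be evaluated correctly by the scheme. Concretely, I would point to the full-adder structure in Figure~\ref{fig:eightbitadder} and recall its standard decomposition into two half adders together with a single \textsc{or} gate.

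First I would fix the three inputs of the full adder --- the two addend bits and the carry-in --- and trace them through the circuit. The first half adder consumes the two addend bits and emits a partial sum and a partial carry; the second half adder consumes that partial sum together with the carry-in and emits the final sum bit together with a second partial carry; finally the \textsc{or} gate combines the two partial carries into the carry-out. This is the only wiring I need to make explicit, and it is already depicted in the referenced figure, so no separate correctness argument for the design is required.

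Next I would invoke the hypotheses directly. By Lemma~\ref{lemma:halfadder} each of the two half adders is computed correctly by the FHE scheme, and by Lemma~\ref{lemma:binarygates} the \textsc{or} gate is computed correctly as well. Since the full adder is nothing more than these three components wired together, the scheme produces the correct encrypted sum bit and carry-out bit, which establishes the claim.

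The step I expect to carry the real (if modest) weight is a compositionality observation: the argument only goes through because correctly-evaluated sub-circuits can be chained, so that feeding the ciphertext output of the first half adder into the second half adder, and the resulting partial carries into the \textsc{or} gate, preserves correctness at each stage. Everything else is bookkeeping of the wiring shown in Figure~\ref{fig:eightbitadder}, and the logical correctness of the full-adder design itself is taken as given, exactly as the half-adder proof took the half-adder design as given.
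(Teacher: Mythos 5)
Your proposal is correct and follows essentially the same route as the paper's own proof: decompose the full adder (per Figure~\ref{fig:eightbitadder}) into two half adders plus an \textsc{or} gate, then invoke Lemma~\ref{lemma:halfadder} and Lemma~\ref{lemma:binarygates} and conclude by composition. Your explicit tracing of the wiring and the compositionality remark is slightly more detailed than the paper's terse ``by construction'' step, but it is the same argument.
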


\begin{proof}
  Reviewing the inset in Figure~\ref{fig:eightbitadder}, we know the full adder is built from two half adders and an \textsc{or} gate (again not proving a full adder structure is correct).
  Lemma~\ref{lemma:binarygates} proved the \textsc{or} gate works.
  Lemma~\ref{lemma:halfadder} proved a single half adder works.
  By construction, following the correct paths provides a full adder capability. 
\end{proof}
It is now time to move on to addition and subtraction.
\begin{lemma}
  \label{lemma:addsubtract}
  Given the binary gate and adder capabilities of the FHE scheme, the
  FHE scheme properly calculates arbitrary bit addition and
  subtraction of integers and fixed fractional format.
\end{lemma}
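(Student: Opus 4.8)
The plan is to follow the same compositional strategy used for the preceding adder lemmas: rather than reprove from scratch that a ripple-carry adder computes the correct sum, I would argue that the FHE scheme faithfully \emph{evaluates} the circuit whose correctness is already standard, thereby reducing the claim to the gate and adder lemmas already established. First I would fix a bit width $n$ and describe the construction of Figure~\ref{fig:eightbitadder}: a single half adder at the least significant bit position (no incoming carry) followed by $n-1$ full adders chained so that each carry-out feeds the carry-in of the next stage. By Lemma~\ref{lemma:halfadder} the half adder is computed correctly by the scheme, and by Lemma~\ref{lemma:fulladder} each full adder is computed correctly; since the scheme is closed under composition of correctly-evaluated gates (the carry-out of one stage is simply an input ciphertext to the next), the entire $n$-bit sum is evaluated correctly by induction on the stage index. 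This settles addition for arbitrary $n$.

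For subtraction I would invoke the two's-complement identity $a - b = a + \overline{b} + 1$, where $\overline{b}$ denotes the bitwise complement of $b$. The \textsc{not} gates needed to form $\overline{b}$ are available by Lemma~\ref{lemma:binarygates}, and the $+1$ is realized by injecting an initial carry-in of $1$ at the least significant stage (which, in the presence of a carry-in, promotes that stage from a half adder to a full adder, still covered by Lemma~\ref{lemma:fulladder}). Thus subtraction reduces to the addition circuit already shown correct, composed with a layer of \textsc{not} gates, all of which the scheme evaluates faithfully.

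Finally I would address the fixed fractional format. The key observation is that a fixed-point value with a fixed number $f$ of fractional bits is stored as the scaled integer $\lfloor 2^f x \rceil$, so once both operands share the same $f$ --- which the paper enforces --- their radix points align and fixed-point addition and subtraction coincide \emph{exactly} with the integer operations on the underlying bit vectors. Hence no new circuitry is required: the integer adder/subtractor above already produces the correct fixed fractional result, and its proof of correctness carries over verbatim.

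The main obstacle I anticipate is not the algebraic identity for subtraction but making the ``closed under composition'' step airtight for arbitrary bit width. I must argue that carry propagation across stages is a genuine ciphertext data dependency --- each carry is an \emph{output} ciphertext consumed as an \emph{input} ciphertext of the next stage --- rather than a tacit assumption about plaintext values, and that the two's-complement argument respects the sign-extension and bit-growth considerations flagged earlier so that behavior at the most significant bit is as intended. Setting up the induction cleanly, and confirming that the shared-$f$ alignment assumption is precisely what collapses fixed-point arithmetic to integer arithmetic, is where the care is needed.
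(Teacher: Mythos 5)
Your proposal is correct and takes essentially the same route as the paper's proof: reduce arbitrary-width addition to the chained full adders of Lemmas~\ref{lemma:halfadder} and~\ref{lemma:fulladder}, obtain subtraction by complementing the second operand with an initial carry of $1$ (a device the paper states in the Section~\ref{sec:fftinfhe} narrative rather than inside the proof), and note that with a common fractional bit width the fixed fractional format runs the identical bit-level algorithm as integers, so one argument covers both. Your write-up is simply a more explicit version of the paper's terse argument --- the stage-by-stage induction, the \textsc{not}-gate layer, and the radix-point-alignment observation are all left implicit there --- with no difference in approach.
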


\begin{proof}
  Proving the scheme can correctly calculate addition and subtraction requires proving the binary result is correct.
  Since the algorithm is the same for both integers and the fixed fractional format, proving one proves both.
  We have the basic algorithm for adding and subtracting binary numbers and it is know to comprise a set of full adders.
  Lemma~\ref{lemma:fulladder} proves the scheme can calculate a full adder correctly.
  Given this, we have proved that the addition and subtraction of ciphertexts occurs correctly in our FHE scheme. 
\end{proof}

Our next two lemmas are related to multiplication. Our first lemma
will involve proving multiplication works for simple integer numbers
including truncating the high order bits.  The second lemma focuses on
fixed fractional format and obtaining the correct resulting value.  An
interesting note at this point because of the lemma is that binary
division could actually be implemented but was not because it was not
needed for FFT; however a single division is needed to support the
fixed fractional format.

\begin{lemma}
  \label{lemma:integermultiplication}
  Given \textsc{and} gates and half and full adders, the FHE scheme
  properly calculates integer multiplication via Wallace Trees.
\end{lemma}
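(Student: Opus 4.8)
The plan is to follow the same reductionist strategy used in the preceding adder proofs: rather than re-deriving that a Wallace Tree correctly computes a product---which is a standard result established in \cite{wallace1964suggestion}---I would show only that the FHE scheme faithfully evaluates every component of the Wallace Tree construction. A Wallace Tree multiplier decomposes naturally into two phases, partial-product generation and reduction, and the proof would treat them in that order. For partial-product generation, each bit of one operand is combined with every bit of the other through an \textsc{and} gate, producing the matrix of partial products. Since Lemma~\ref{lemma:binarygates} establishes that the scheme computes \textsc{and} gates correctly from the underlying \textsc{nand} gate, every entry of this matrix is produced correctly under encryption.

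Next I would address the reduction phase, in which the columns of partial products are summed. The Wallace Tree reduces these columns layer by layer using only full and half adders until two rows remain, which are then merged by a final carry-propagating addition. Lemma~\ref{lemma:halfadder} and Lemma~\ref{lemma:fulladder} guarantee that the scheme evaluates each half and full adder correctly, and since the reduction is nothing more than a fixed, acyclic wiring of these adders, following the prescribed connections yields the correct $2n$-bit output under encryption. Truncation of the high-order bits is then simply a selection of which output ciphertexts to retain; it involves no further computation, so the integer product at the agreed bit width is obtained correctly. By construction, the entire multiplier reduces to the already-proved correctness of \textsc{and} gates and adders.

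The step I expect to be the main obstacle is making precise the claim that \emph{truncating} the high-order bits still yields a correct integer product. Because binary multiplication doubles the operand width and the scheme uses a two's-complement representation, correctness of the retained bits hinges on appropriate sign extension of the operands \emph{before} multiplication, as noted in the discussion of bit extension for negative numbers. I would therefore need to argue carefully that the bits extracted from the $2n$-bit Wallace Tree output agree with the true signed product, rather than merely asserting it; the analogous question of fixed-point scaling and middle-bit extraction I would explicitly defer to the subsequent lemma on fixed fractional multiplication. Apart from this two's-complement bookkeeping, the argument is a routine appeal to Lemmas~\ref{lemma:binarygates}, \ref{lemma:halfadder}, and~\ref{lemma:fulladder}.
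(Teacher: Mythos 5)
Your proposal follows essentially the same route as the paper's proof: reduce the Wallace Tree to its \textsc{and}-gate partial-product stage plus half/full-adder reduction, invoke Lemmas~\ref{lemma:binarygates}, \ref{lemma:halfadder}, and~\ref{lemma:fulladder} for component correctness, and handle signed operands by sign extension to double width followed by dropping the high-order bits. The only difference is one of rigor, in your favor: where the paper asserts the two's-complement truncation step ``can be done easily,'' you correctly flag it as the point requiring a careful argument that the retained bits agree with the true signed product.
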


\begin{proof}
  Starting off, we know that the Wallace Trees correctly calculate integer multiplication of binary values.
  To prove our scheme calculates multiplication of two binary integers correctly, we need to show that we can correctly compute the Wallace Trees.
  Wallace Trees are built from a set of \textsc{and} gates, followed by combinations of half and full adders.
  All three of these correctly compute binary values in our scheme based on our previous lemmas.
  The final importance of the proof is to show both positive and negative can be correctly calculated.
  This can be done easily by bit extending the binary values to twice the original size.
  Finally, we drop the higher order bits to keep the binary values size at their original size.
  Therefore, our scheme correctly calculates integer multiplication in the encrypted domain. 
\end{proof}

\begin{lemma}
  \label{lemma:fixedmultiplication}
  Given \textsc{and} gates and half and full adders, the FHE scheme
  properly calculates fixed fractional format multiplication via
  Wallace Trees.
\end{lemma}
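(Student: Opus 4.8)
The plan is to reduce Lemma~\ref{lemma:fixedmultiplication} almost entirely to Lemma~\ref{lemma:integermultiplication}, since the binary machinery that actually computes the product is identical; the only genuinely new ingredient is the bit-extraction step that restores the fixed fractional scaling. First I would observe that a value in the fixed fractional format is nothing more than an integer $a$ together with an implicit scaling factor $2^{f}$, so that the stored $x$-bit pattern represents $a\cdot 2^{-f}$. Multiplying two such values $a\cdot 2^{-f}$ and $b\cdot 2^{-f}$ therefore amounts to forming the integer product $a\cdot b$ and interpreting it with the doubled scaling $2^{-2f}$. Because the underlying binary operation is exactly the integer multiplication of Lemma~\ref{lemma:integermultiplication}, the full double-width Wallace Tree product $a\cdot b$ --- including correct two's complement sign handling obtained by bit extending the operands to $2x$ bits --- is computed correctly by the scheme, and I would simply invoke that lemma for this step.

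The second step is to show that selecting the \emph{middle} $x$ bits of the $2x$-bit product correctly reconstitutes a value in the original fixed fractional format. I would argue that discarding the lowest $f$ bits of the product is an implicit integer division by $2^{f}$, which converts the $2^{-2f}$-scaled product back to the intended $2^{-f}$ scaling, while discarding the corresponding high bits returns the result to its original $x$-bit width, exactly as integer multiplication discards its high bits. The crucial point for the FHE argument is that this window selection is a purely positional relabeling of ciphertexts that have \emph{already} been computed correctly: no further gates are evaluated, so no new correctness obligation is incurred beyond Lemma~\ref{lemma:integermultiplication}. I would fold the sign argument into this step by noting that, because the operands were bit extended before the tree, the sign is carried correctly through the top of the product and is inherited by the sign bit of the middle window, so negative fixed fractional values multiply correctly as well.

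The step I expect to be the main obstacle is justifying that extracting the middle window yields \textbf{the} correct fixed-point value rather than merely a plausible one: specifically, pinning down the exact bit offset $f$ so that the implicit division by the scaling factor is exact, confirming that the sign bit of the truncated result coincides with the bit carried into the middle window, and accounting for the fact that dropping the low $f$ bits is a truncation. I would settle the offset and sign questions by an explicit two's complement bookkeeping argument over the extracted window, and I would defer the magnitude of the resulting quantization error to Section~\ref{sec:erroranalysis}, since this lemma asserts correct \emph{computation in the format} rather than error-free arithmetic.
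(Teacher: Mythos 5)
Your proposal is correct and takes essentially the same route as the paper's own proof: invoke Lemma~\ref{lemma:integermultiplication} for the double-width integer product, then observe that because the scaling factor is forced to be a power of two, dividing it out is a bit shift realized by extracting the middle bits, with quantization error deferred to Section~\ref{sec:erroranalysis}. Your version is more explicit than the paper's about the window offset, two's complement sign propagation, and the fact that extraction evaluates no new gates, but these are elaborations of the same argument rather than a different approach.
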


\begin{proof}
  We start with the fact that we can assume to have \textsc{and} gate, half adder and a full adder that correctly compute there binary values in FHE.
  From Lemma~\ref{lemma:integermultiplication}, our scheme will properly calculate integer multiplication, which is part of our fixed fractional format.
  In a fixed number format, we need to divide the scaling factor out of the multiplied values, which is $2n$ of the previous size.
  Because we forced the multiplier to be a power of two, division is a simple bit shift.
  This is easy in our FHE scheme to drop the lower and higher order bits (extract the middle bits).
  Thus, the scheme calculates the fixed fractional format multiplication. 
\end{proof}

Using the above lemmas and their respective proofs, we can prove the
evaluation lemma (\ref{lemma:evaluation}) which is the key butterfly
computation of the FFT; without which the correctness of computations
for FFT in encryption domain cannot be proven to be correct.

\begin{proof}
  Remembering the two parts of the FFT, first is the bit reversal and then the butterfly computations.
  The bit reversal does not need to occur in the encrypted domain so this process works correctly as before.
  As the butterfly computations are processed in a loop and the loop processing is not encrypted, proving that the butterfly computations can be calculated in the encrypted domain will prove that the FFT can be calculated using FHE. 

  Simply, the butterfly computation is a pair of equations of complex numbers.
  Equivalently, there are six real number equations to calculate.
  These equations are a combination of addition, subtraction, and multiplication (via constant values).
  Lemmas~\ref{lemma:addsubtract} and~\ref{lemma:fixedmultiplication} proved that the FHE scheme correctly calculates addition, subtraction, and multiplication of encrypted fixed fractional values.
  Because of this, we can calculate the butterfly computations and properly calculate the FFT.
\end{proof}

Finally we can prove Theorem~\ref{thm:fftinfhe}.
\begin{proof}
  To prove that FFT can be calculated via an FHE method, we focused on
  three pieces in the FHE scheme: Encryption, Evaluation, and
  Decryption.  We have used three individual lemmas to prove that each
  one of these pieces is correctly computed in FHE
  (Lemmas~\ref{lemma:encryption},~\ref{lemma:evaluation},
  and~\ref{lemma:decryption}).  Because we have proved the entire
  process works, we have proved the FFT can be computed in the
  encrypted domain using our FHE scheme.
\end{proof}

\section{Error Analysis}
\label{sec:erroranalysis}

Having discussed the implementation of FFT using FHE, we turn our
attention to estimating the error. Our only source of error is the
conversion of floating point numbers to fixed point numbers.  While
initially this might not be a major problem, over time the
calculations will lose accuracy.  Immediately, we know that there will
be dependencies on the number of points in the signal because of the
dependency on the size with the number of iterations.

We start with building up the bounds of the error introduced by the
fixed point representation.  There will some initial error caused by
truncation when moving to fixed point representation.  Following that,
the main computation is the two point butterfly, which involves a
complex point multiplication and an addition.  We complete this
section with a proof on the error bound for the entire FFT in FHE.

\begin{lemma}
  \label{lemma:cpmult}
  For a multiplication involving two complex points ($a+bi$, $c+di$) with an
  error of $\Delta,\ (0 \leq \Delta < 1)$ in each of the real and
  imaginary components, the error in the resultant is bounded by:
  \begin{equation}
    \left( \Delta \left( a + c - b - d \right), \Delta \left( a + b + c + d \right) \right).
  \end{equation}
\end{lemma}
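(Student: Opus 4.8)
The plan is to read the claimed expression as the first-order propagation of the fixed-point truncation error through the \emph{exact} complex-multiplication formula. First I would recall that the true product of the two points is $(a+bi)(c+di) = (ac-bd) + (ad+bc)i$, so the exact real and imaginary parts are $ac-bd$ and $ad+bc$. Since the error analysis of the previous paragraphs attributes all inaccuracy to uniform truncation of each stored component to the fixed fractional format, I would model every input component as shifted by the same amount $\Delta$ in a single direction, i.e.\ replace $a,b,c,d$ by $a+\Delta,\,b+\Delta,\,c+\Delta,\,d+\Delta$, and then measure how far the computed product drifts from the exact one.

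Next I would expand the perturbed product and subtract the exact product to isolate the error component by component, grouping the result by powers of $\Delta$. For the real part this yields
\begin{equation*}
  \bigl[(a+\Delta)(c+\Delta)-(b+\Delta)(d+\Delta)\bigr]-(ac-bd)=\Delta\,(a+c-b-d),
\end{equation*}
where the two $\Delta^{2}$ contributions cancel \emph{exactly}, so the real-component error is precisely the first entry of the claimed pair. For the imaginary part the analogous computation gives
\begin{equation*}
  \bigl[(a+\Delta)(d+\Delta)+(b+\Delta)(c+\Delta)\bigr]-(ad+bc)=\Delta\,(a+b+c+d)+2\Delta^{2},
\end{equation*}
and discarding the second-order remainder $2\Delta^{2}$ recovers the second entry $\Delta(a+b+c+d)$.

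The main obstacle is therefore not algebraic but interpretive: justifying the two modeling choices that make the clean signed formula come out. The first is the uniform-direction assumption, namely that each component carries the \emph{same} signed error $+\Delta$ rather than an independent sign; this is what produces the mixed-sign expression $a+c-b-d$ in the real part instead of an absolute-value bound, and I would state it explicitly as a reflection of consistent fixed-point truncation. The second is the first-order approximation itself: the real part needs no approximation since its quadratic terms vanish, but the imaginary part leaves a residual $2\Delta^{2}$ that must be argued negligible relative to the linear term, which I would justify by invoking the hypothesis $0\le\Delta<1$ so that $\Delta^{2}$ is genuinely second-order small. Once these two points are made precise, the lemma follows directly from the two displayed expansions above.
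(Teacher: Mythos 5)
Your proposal matches the paper's proof essentially step for step: the paper likewise perturbs each component to $(a+\Delta,\,b+\Delta)$ and $(c+\Delta,\,d+\Delta)$, expands the complex product, and drops the $\Delta^{2}$ terms on the grounds that $\Delta<1$, arriving at exactly the pair $\left(\Delta(a+c-b-d),\ \Delta(a+b+c+d)\right)$. Your version is in fact slightly sharper, since you observe that the quadratic terms cancel exactly in the real part and isolate the single residual $2\Delta^{2}$ in the imaginary part, whereas the paper discards all second-order terms uniformly without noting the cancellation.
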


\begin{proof}
  Our two initial points are:
  \begin{eqnarray}
    (a + \Delta, b + \Delta) \\
    (c + \Delta, d + \Delta) 
  \end{eqnarray}
  Multiplying these together yields,
  \begin{eqnarray}
    \left( \left( a + \Delta \right) \cdot \left( c + \Delta \right) - \left( b + \Delta \right) \cdot \left( d + \Delta \right), \right. \nonumber\\
    \left. \left( b + \Delta \right) \cdot \left( c + \Delta \right) + \left( a + \Delta \right) \cdot \left( d + \Delta \right) \right)
  \end{eqnarray}
  As we expand the factors, we will drop terms of $\Delta^2$ as these will not be the primary source of error in our final equations (because $\Delta < 1$).
  \begin{eqnarray}
    \left( a \cdot d + \Delta \cdot ( a + c ) - \left( b \cdot d + \Delta \cdot ( b + d ) \right), \right. \nonumber\\
    \left.  b \cdot c + \Delta \cdot ( b + c ) + a \cdot d + \Delta \cdot ( a + d ) \right)
  \end{eqnarray}
  Continuing to combine terms,
  \begin{eqnarray}
    \left( a \cdot d - b \cdot d + \Delta \cdot ( a + c - b - d ), \right. \nonumber\\
    \left.  b \cdot c + a \cdot d + \Delta \cdot ( b + c + a + d ) \right)
  \end{eqnarray}
  This provides errors in the real and imaginary components as:
  \begin{eqnarray}
    \left( \Delta \cdot ( a + c - b - d ), \right. \nonumber\\
    \left. \Delta \cdot ( b + c + a + d ) \right)
  \end{eqnarray}
\end{proof}

Next, we examine the error for a single butterfly computation.  The
original FFT equation is:
\begin{equation}
  X_i = x_i + W_n * x_j.
\end{equation}

\begin{lemma}
  \label{lemma:butterfly}
  For a single butterfly computation, the error is bounded by
  \begin{eqnarray}
    \Delta \cdot \left( \mbox{Re}(W_n) + \mbox{Im}(W_n) +
    \mbox{Re}(x_j) + \mbox{Im}(x_j) + 1 \right)
  \end{eqnarray}
  for a single butterfly computation of a single point where $\Delta$ is the original error. 
\end{lemma}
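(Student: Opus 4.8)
The plan is to decompose the butterfly $X_i = x_i + W_n \cdot x_j$ into its two constituent operations --- the complex multiplication $W_n \cdot x_j$ followed by the complex addition with $x_i$ --- and to track how each operation contributes to the error. First I would apply Lemma~\ref{lemma:cpmult} directly to the product $W_n \cdot x_j$, setting $W_n = a + bi$ with $a = \mbox{Re}(W_n)$, $b = \mbox{Im}(W_n)$, and $x_j = c + di$ with $c = \mbox{Re}(x_j)$, $d = \mbox{Im}(x_j)$. Since both the twiddle constant $W_n$ and the signal value $x_j$ are stored in the fixed fractional format and therefore each carry the truncation error $\Delta$ in their real and imaginary parts, the hypotheses of Lemma~\ref{lemma:cpmult} are met, and it yields the componentwise product error $\left( \Delta(a + c - b - d),\ \Delta(a + b + c + d) \right)$.

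Next I would fold in the addition of $x_i$. Because $x_i$ itself carries error $\Delta$ in each component and addition propagates error additively to first order, the accumulated error after the full butterfly is $\left( \Delta(a + c - b - d) + \Delta,\ \Delta(a + b + c + d) + \Delta \right)$, that is, $\left( \Delta(a + c - b - d + 1),\ \Delta(a + b + c + d + 1) \right)$. As in Lemma~\ref{lemma:cpmult}, I would keep this a first-order bound in $\Delta$, discarding $\Delta^2$ contributions arising from the interaction of the two stages.

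To collapse these two component errors into the single scalar bound claimed in the statement, I would take the dominating component. Using that the imaginary magnitudes $b, d$ entering the expression are nonnegative, the imaginary-component error $\Delta(a + b + c + d + 1)$ upper-bounds the real-component error $\Delta(a + c - b - d + 1)$, and substituting back the $\mbox{Re}/\mbox{Im}$ notation gives exactly $\Delta\left( \mbox{Re}(W_n) + \mbox{Im}(W_n) + \mbox{Re}(x_j) + \mbox{Im}(x_j) + 1 \right)$.

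The main obstacle I anticipate lies in this last step, namely the sign and magnitude bookkeeping. The real-component error $a + c - b - d$ carries mixed signs, so asserting that the all-positive imaginary expression is a uniform upper bound for both components requires making explicit whether $a, b, c, d$ denote signed quantities or their absolute values, and identifying the regime of $W_n$ and $x_j$ in which $|a + c - b - d| \leq a + b + c + d$ genuinely holds. Pinning down these nonnegativity assumptions (or replacing the quantities by their magnitudes throughout) is the part of the argument that must be justified carefully rather than assumed.
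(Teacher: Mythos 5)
Your proposal follows the paper's proof essentially verbatim: the paper likewise writes $X_i = x_i + \Delta + \left( W_n + \Delta \right) \cdot \left( x_j + \Delta \right)$, invokes Lemma~\ref{lemma:cpmult} for the product, adds the extra $\Delta$ from the $x_i$ term, and arrives at exactly your two componentwise first-order errors $\Delta(a+c-b-d+1)$ and $\Delta(a+b+c+d+1)$. The sign/magnitude issue you flag in the final collapse to a single scalar is genuine, but the paper does not resolve it either---it simply records both components (remarking only that the signs flip on the second butterfly output while the $+1$ stays positive) and implicitly treats the $\mbox{Re}$ and $\mbox{Im}$ quantities as magnitudes, a reading confirmed later when the proof of Theorem~\ref{thm:fftfheerror} bounds the $W_n$ by their absolute values.
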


\begin{proof}
  We start with the butterfly computation equation for a single point:
  \begin{equation}
    X_i = x_i + W_n \cdot x_j 
  \end{equation}
  Then adding $\Delta$ to each of the terms:
  \begin{equation}
    X_i = x_i + \Delta + \left( W_n + \Delta \right) \cdot \left( x_j + \Delta \right)
  \end{equation}
  Lemma~\ref{lemma:cpmult} provides the bound on the error for the right hand side of the equation.
  Then there is only addition of another $\Delta$ in both real and imaginary parts.
  This yields:
  \begin{eqnarray}
    \Delta \cdot \left( \mbox{Re}(W_n) + \mbox{Re}(x_j) - \mbox{Im}(W_n) - \mbox{Im}(x_j) + 1 \right) \nonumber\\
    \Delta \cdot \left( \mbox{Re}(W_n) + \mbox{Re}(x_j) + \mbox{Im}(W_n) + \mbox{Im}(x_j) + 1 \right)
  \end{eqnarray}
  On the second half of the butterfly computations, the signs of the real and imaginary values are flipped (the one is always positive). 
\end{proof}

The previous two lemmas help us bound the overall error of the FFT by
estimating the error accumulated over time.  The most important item
about the butterfly computations is that they reduce an $O(N^2)$
algorithm to a $O(N \log N)$ algorithm.  This means the resulting
value the FFT for a single point is a $O(N^2)$-based value that only
uses $O(N \log N)$ computations.  This helps bound the error from
above: the summation of all signal points and the $W_n$ values.  But
because we perform less computations, the value can be bounded lower.

\begin{theorem}
  \label{thm:fftfheerror}
  For performing FFT in FHE, the error introduced by the processing is
  bounded by:
  \begin{equation}
    \Delta \cdot \frac{N}{2} \left( \log N + X_b + 1 \right),
  \end{equation}
  where $\Delta$ is the original representation, $W_S$ (used in proof)
  is a sum of all $W_n$ that appear in the FFT for a given size
  $N$, and $X_b$ is a bound on the size of the signal points.
\end{theorem}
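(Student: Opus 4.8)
The plan is to obtain the global bound by accumulating the single-butterfly estimate of Lemma~\ref{lemma:butterfly} over the whole transform, exploiting the fact that the radix-2 FFT consists of exactly $\log N$ stages, each containing $N/2$ disjoint butterfly operations, for a total of $\frac{N}{2}\log N$ butterflies. First I would fix the accounting: since every stage applies one butterfly to the running value of a given coordinate, the error carried by any single point is a sum of contributions of the form supplied by Lemma~\ref{lemma:butterfly}, namely $\Delta\bigl(\mbox{Re}(W_n) + \mbox{Im}(W_n) + \mbox{Re}(x_j) + \mbox{Im}(x_j) + 1\bigr)$, with one such term contributed at each butterfly it passes through. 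The strategy is then to sum these contributions across the entire computation graph and group the resulting terms.

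Next I would split the accumulated error into three pieces. The constant ``$+1$'' contributed by each butterfly, summed over all $\frac{N}{2}\log N$ of them, yields the $\frac{N}{2}\log N$ term. The twiddle-factor parts $\mbox{Re}(W_n) + \mbox{Im}(W_n)$, aggregated over every butterfly appearing in the transform, are exactly the quantity $W_S$ named in the statement; because each $W_n$ is a root of unity, its real and imaginary parts have magnitude at most $1$, so $W_S$ is controllable and folds into the $\frac{N}{2}$-scaled contribution. Finally, the signal-dependent parts $\mbox{Re}(x_j) + \mbox{Im}(x_j)$ are dominated uniformly by the bound $X_b$, and their aggregate produces the $\frac{N}{2}X_b$ term. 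Collecting the three pieces gives $\Delta\left(\frac{N}{2}\log N + \frac{N}{2}X_b + \frac{N}{2}\right) = \Delta\cdot\frac{N}{2}\left(\log N + X_b + 1\right)$, matching the claimed bound.

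The main obstacle I anticipate is justifying that the errors accumulate \emph{additively} rather than compounding multiplicatively. Because the output of one butterfly is fed as an input $x_j$ into the butterflies of the next stage, a careless analysis would let the $\Delta$ perturbations multiply stage by stage; the argument must instead rely on the linearity of the butterfly, together with the same dropping of $\Delta^2$ terms used in Lemma~\ref{lemma:cpmult}, to keep the growth linear in the number of stages. Making precise that the intermediate values $x_j$ remain governed by the single bound $X_b$ — rather than by a bound that itself inflates through the stages — is the delicate point, and reconciling the exact coefficients of the three grouped terms so that they collapse to the clean factor $\frac{N}{2}$ is where the care is needed. This is precisely what ties the estimate to the $O(N\log N)$ operation count of the FFT rather than to the $O(N^2)$ count of the naive DFT, as the surrounding discussion indicates.
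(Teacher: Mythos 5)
Your overall strategy is the same as the paper's---accumulate the per-butterfly estimate of Lemma~\ref{lemma:butterfly} over the $\frac{N}{2}\log N$ butterflies of the radix-2 FFT, split the accumulated error into a twiddle piece, a signal piece, and a constant piece, and control the twiddle piece using $|W_n|\le 1$---but your bookkeeping of the three pieces is internally inconsistent and does not actually collapse to the stated bound. Each butterfly contributes one term of \emph{each} of the three kinds, so whatever multiplicity you attach to the ``$+1$'' contributions must apply equally to the $\mbox{Re}(W_n)+\mbox{Im}(W_n)$ and $\mbox{Re}(x_j)+\mbox{Im}(x_j)$ contributions. You sum the $+1$'s over all $\frac{N}{2}\log N$ butterflies to get the $\frac{N}{2}\log N$ term, but then assert that the twiddle aggregate $W_S$ ``folds into the $\frac{N}{2}$-scaled contribution'' and that the signal terms aggregate to $\frac{N}{2}X_b$. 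Under your own per-butterfly counting, the signal piece is $\frac{N}{2}\log N\cdot X_b$ and the twiddle piece is of order $\frac{N}{2}\log N$, which gives a bound of the form $\Delta\cdot\frac{N}{2}\log N\left(X_b+O(1)\right)$---worse than the theorem by a factor of $\log N$ on the $X_b$ term---and your implicit claim $W_S\le\frac{N}{2}$ is unsupported, since there are $\frac{N}{2}\log N$ twiddle occurrences each of magnitude up to one. The paper's allocation is exactly the opposite of yours: it is $W_S$ that absorbs the $\log N$ factor (each butterfly contributes one twiddle of magnitude at most one, so $W_S\le\frac{N\log N}{2}$), while the paper argues that per output point the iterations keep the $W_n$ contributions accumulating but add only the $\frac{N}{2}$ distinct signal terms $x_j$ (over even $j$) and $\frac{N}{2}$ ones, yielding the intermediate bound $\Delta\left(W_S+\frac{N}{2}\left(X_b+1\right)\right)$ before the $W_S$ substitution.

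The two difficulties you flag at the end---that errors must accumulate additively rather than compound, and that intermediate $x_j$ values must remain governed by the single bound $X_b$ across stages---are indeed the crux, and the paper addresses them (admittedly informally) precisely through its stage-by-stage accounting: it claims that after each outer iteration the $W_n$ contributions are unchanged while the $x_j$'s acquire additional terms, which is what caps the signal and constant multiplicities at $\frac{N}{2}$ per point. Since you defer exactly this accounting, and since it is the step on which the clean factor $\frac{N}{2}\left(\log N + X_b + 1\right)$ depends, your proposal as written does not establish the theorem: the grouping step where you claim the coefficients ``collapse'' is the gap, not a routine verification.
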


\begin{proof}
  Using Lemma~\ref{lemma:butterfly}, we know after a single butterfly
  for a value the error is:
  \begin{eqnarray}
    \Delta \cdot \left( \mbox{Re}(W_n) + \mbox{Re}(x_j) - \mbox{Im}(W_n) - \mbox{Im}(x_j) + 1 \right), \nonumber\\
    \Delta \cdot \left( \mbox{Re}(W_n) + \mbox{Re}(x_j) + \mbox{Im}(W_n) + \mbox{Im}(x_j) + 1 \right).
  \end{eqnarray}
  After the second iteration of the outer loop of FFT, the $W_n$ will
  be unchanged but the $x_j$s will have additional values (and
  error).  The key point here is what is being added to the error over
  time.  It is actually the real and imaginary components of the
  signal points.  This tells us at the second iteration, we have added
  up all $W_n$  plus a number of terms from each $x_j$ seen
  so far, and an equivalent number of ones from the $x_i$ sides.  After
  the $\log N$ iterations, each point will have the following error
  contributions:
  \begin{equation}
    \Delta \cdot \left( \sum_{W_n \in W} W_n + \sum_{\mbox{even} j} x_j + \frac{N}{2} \right),
  \end{equation}
  where we have used $W$ to represent the set of $W_n$.  Since each $W_n$
  is a known constant for a given $N$, we will call this
  sum bound as $W_S$.  Additionally, we know that we can assume
  a bound on each $x_j$ without loss of generality.  Calling this
  value $X_b$ and knowing there are $\frac{N}{2}$ of them, we can
  update the bound as:
  \begin{equation}
    \Delta \cdot \left( W_S + \frac{N}{2} \left( X_b + 1 \right) \right)
  \end{equation}
  An equivalent view on  $W_S$ parameter is
  that the $W_n$ absolute values are less than one, which means $W_S$
  is bounded by the number of times any $W_n$ enters an equation
  (i.e., $\frac{N \log N}{2}$) resulting in a total error of
  \begin{equation}
    \Delta \cdot \left( \frac{N}{2} \left( \log N + X_b + 1 \right) \right).
  \end{equation}
\end{proof}

This provides some different results than
\cite{shortell215secruresignal}.  In particular, the error dependents
on the total number of points, which in turn means that the error will
increase as the size of the signal increases.  So when choosing a
multiplier with the fixed fractional format, the signal size will
matter. This is a good place to remember that an FHE scheme can evaluate ciphertexts indefinitely by refreshing ciphertexts (see \cite{gentry:computing} and \cite{gentry2013homomorphic} for details).

\section{Experimental Results}
\label{sec:experimentalresults}

To verify our implementation works, we ran random signals against the
FFT in FHE and compared the results to an unencrypted standard version
FFT.  Our main focus is to measure the error introduced by our
implementation and verify Theorem~\ref{thm:fftfheerror} is valid.  We
used signal sizes of 8, 16, 32, 64, and 128 complex data signals.
Additionally, we constrained the signal to be values in the range
$[0,1]$ similar to what many real world signals operate in.  In the
encrypted domain, we work in a 32-bit binary space and 16-bit
fractional space.  This means we limit our non-fractional integer size
to 16-bits and we are using a multiplier of 65536.  Having a
multiplier of 65536 is useful since it will provide numerical accuracy
up to $1/65536 \approx 1.5259e-5$.  As we saw in the previous section,
we will not be containing the entire set of values in this space
because over time the processing will lose accuracy (particularly in
multiplication).

It is important to look at measured error introduced as a whole for
given signals.  Our main focus is to make sure we can constrain the
error.  When comparing individual values between the unencrypted and
encrypted results, we can calculate the total error sum of all points
($2n$ from $n$ complex points).  We also calculate the mean error
across the points along with the variance and standard deviation.
Table~\ref{tbl:measurederror} shows the results of the measured error
from the random testing.  One of the key aspects of the results is
that the average error per point is slowly rising upwards above zero
but is staying in the $10^{-5}$ range.  This is expected as the error
is dependent on the total number of points in the signal.
Considering our accuracy of $1/65536$ and that the signal size is a multiplicative factor as well, experimentally the error is contained below the bounding from the previous section. 

\balance

\begin{table}[ht]
  \scriptsize
  \caption{Measured Error in One Dimensional Tests}
  \centering
  \begin{tabular}{|l|c|c|c|c|}
    \hline
    Complex & Total & Average &  & Standard \\
    Point Size & Error & Error & Variance & Deviation \\
    \hline
    8 pt & 0.0207 & 1.294e-5 & 8.208e-11 & 9.060e-06 \\
    \hline
    16 pt & 0.0709 & 2.216e-5 & 2.434e-10 & 1.560e-05  \\
    \hline
    32 pt & 0.258 & 4.199e-5 & 1.714e-09 & 4.140e-05 \\
    \hline
    64 pt & 1.030 & 8.383e-5 & 4.400e-09 & 6.633e-05 \\
    \hline
    128 pt & 4.437 & 1.81e-4 & 2.856e-08 & 1.69e-04 \\
    \hline
  \end{tabular}
  \label{tbl:measurederror}
\end{table}

Finally, we tested our configuration using $16 \times 16$ images (10
total images; shown in Figure~\ref{fig:images}).  This provided a two
dimensional test of the FFT over FHE in the multidimensional case.  We
calculated the same values in the one dimensional case.  The total
error was $0.311$ which led to an average error of $6.067\times
10^{-5}$.  This equates to a variance of $5.558 \times 10^{-9}$ and a
standard deviation of $7.455 \times 10^{-5}$.  These values show our
framework contains the error within bounds.

\section{Time/Space Complexities}
\label{sec:timespacecomplexities}

FHE schemes are known to be computationally intensive.
Encrypting a single value plaintext generates a two dimensional matrix in the ciphertext space.
A \textsc{nand} gate in the FHE scheme is an $O(N^3)$ process (matrix-matrix multiplication). 
This can be partially reduced by using parallelization techniques (GPU processing is real potential here).
Focusing on FFT, which is an $O( M \log M)$ process, we need to understand the number of gates that can processed in total.
$M \log M$ is the number of addition and multiplication gates being processed in total.
A fixed point addition process is $36*F$ \textsc{nand} gates, where $F$ is the size of fixed point size and $36$ comes from the nand gates in the $F$ sequential full adders.
A fixed point multiplication process is $288 F^2 \log F$ \textsc{nand} gates.
$4F$ \textsc{and} gates for the first step and worst case assumed full adders for the $\log F$ process.
This second process is for $2F$ full adders.
It should be noted that this bound is worst case and can be significantly tighten because Wallace trees do not need to run full adders at each step. 
Combining these calculations yields an asymptotic running time of $O( M \log{M} F^2 \log{F} N^3$ with constants removing, where we have used $M$ to be the number of data points in the FFT, $F$ to be the fixed point binary size, and $N$ to be ciphertext size.

Coming back to the space complexity, we know a ciphertext text has $N^2$ space (matrix).
We will set the size of the signal to be $L$, ie the total amount of data points in the signal whether single or multi-dimensional. 
Each data point has a set of $F$ binary values.
Multiplying these together yield: $O(L F N^2)$ space complexity.
In both cases, these are extremely high complexities (especially compared to unencrypted processing).

\section{Conclusion}
\label{sec:conclusion}

Having shown that we can perform the Fast Fourier Transform in the
encrypted domain, we are now able to expand the capabilities of the
FHE framework to target additional signal processing algorithms and
other potential image processing algorithms like SIFT and SURF.  There
are other open issues with FHE.  One major open item is that the FHE
processing takes significant amount of time because of the
matrix-matrix multiplication required for the underlying processing.
Being able to improve computational performance of the FHE processing
would contribute significantly to making FHE a viable solution in real
world computing.

\bibliographystyle{abbrv}
\bibliography{fftpapernew}

\end{document}